\pdfoutput=1

\documentclass[elsarticle.cls,superscriptaddress, notitlepage]{revtex4-1}
\usepackage{amsmath}
\usepackage{latexsym}
\usepackage{amssymb}
\usepackage{graphics,epstopdf}
\usepackage{graphicx}
\usepackage[colorlinks=true, citecolor=blue, urlcolor=blue ]{hyperref}
\usepackage{float}
\usepackage{graphicx}
\usepackage{color}
\usepackage{amsfonts}
\usepackage{soul}
\usepackage{amsmath,amsthm}
\newtheorem{theorem}{Theorem}
\newtheorem{corollary}{Corollary}
\usepackage{amsmath}
\DeclareMathOperator{\Tr}{Tr}

\begin{document}

\title{Testing local-realism and macro-realism under generalized dichotomic measurements}

\author{Debarshi Das}
\email{debarshidas@jcbose.ac.in}
\affiliation{Centre for Astroparticle Physics and Space Science (CAPSS), Bose Institute, Block EN, Sector V, Salt Lake, Kolkata 700 091, India}

\author{Shiladitya Mal}
\email{shiladitya.27@gmail.com}
\affiliation{S. N. Bose National Centre for Basic Sciences, Block JD, Sector III, Salt Lake, Kolkata  700 098, India}

\author{Dipankar Home}
\email{dhome@jcbose.ac.in}
\affiliation{Centre for Astroparticle Physics and Space Science (CAPSS), Bose Institute, Block EN, Sector V, Salt Lake, Kolkata 700 091, India}

\begin{abstract}
Generalized quantum measurements with two outcomes are fully characterized by two real parameters, dubbed as sharpness parameter and biasedness parameter and they can be linked with different aspects of the experimental setup. It is known that sharpness parameter characterizes precision of the measurements and decreasing sharpness parameter of the measurements reduces the possibility of probing quantum features like quantum mechanical (QM) violation of local-realism (LR) or macro-realism (MR). Here we investigate the effect of biasedness together with that of sharpness of measurements and find a trade-off between those two parameters in the context of probing QM violations of LR and MR. Interestingly, we also find that the above mentioned trade-off is more robust in the latter case.
\end{abstract} 

\pacs{ 03.65.Ta, 03.65.Ud}

\maketitle

\section{Introduction}
Nonclassical features of quantum mechanics, for example, quantum mechanical (QM) violations of local-realism (LR) \cite{bell'65, chsh} or macro-realism (MR) \cite{leggett'85} are probed through performing incompatible measurements on systems. Ideal measurements, also known as projective measurements, are described by a set of projectors acting on the system's Hilbert space. This is also known as Von Neumann measurement after his seminal work formalising measurement scheme in QM \cite{von}. Later this concept is extended to  positive operator valued measurement (POVM) and presently it describes the most general kind of quantum measurements. POVM has operational advantages in many tasks over projective measurements, for example, distinguishing nonorthogonal states \cite{disting}, demonstrating hidden nonlocality \cite{popescu, gisin}.

In POVM formalism two observables can be measured jointly even when they do not commute. It is well known \cite{fine} that the observables which can be measured jointly do not lead to the violations of  Bell-CHSH (Bell-Clauser-Horne-Shimony-Holt) inequalities \cite{bell'65, chsh}. Moreover for two dichotomic measurements, POVMs are not better than projective measurements in the context of QM violations of Bell-CHSH inequalities \cite{clave}. It was shown by considering the implication of quantum entanglement to nonlocal game, which is a kind of cooperative game of incomplete information. Nonlocal game can be described as follows: a referee, who determines the game, randomly chooses questions, drawn from  finite sets according to some fixed probability distribution and send them to two players (say, Alice and Bob) at distant locations.  Alice and Bob respond to the referee with an answer without communicating themselves. The referee then evaluates some predicate based on the questions asked and their answers, to determine whether they win or lose.  Alice and Bob can gain advantage in winning if they share quantum correlations instead of classical correlations \cite{clave}.

Condition for joint measurability of two noncommuting observables was derived \cite{busch'86} and for two dichotomic observables it is fully characterized  \cite{busch2, stano, yu}. It is also shown that for particular two-level observables the border of joint measurability coincides with the one for the violation of the Bell-CHSH inequality \cite{and}. In \cite{wolf} it has been shown that, for two non-jointly measurable observables at one side, there always exists state and projective measurements for the other side such that the violation of the CHSH inequality is enabled. This result was shown in  \cite{wolf} by casting joint measurability, considering its implicit characterization, as a problem of semi-definite programme (SDP) \cite{sdp}.

Moving to the practical origin of POVM, it is known that they occur in quantum measurement formalism mainly due to two reasons \cite{busch3}. Firstly, POVM may account for the ever-present imperfections of any measurement and secondly, there are measurement situations for which there exists no ideal background observable represented by a projective valued measurement (PVM). Examples of the second reason include genuine phase space observables, as the individual measurement outcomes are fuzzy phase space points in accordance with the Heisenberg uncertainty relation. Regarding the first reason it is known that in Von Neumann measurement scheme there is a cut between classical and quantum domain where quantum systems are measured by classical apparatus. System variable to be measured becomes entangled with classical probe in the process of measurement interaction. By sharply distinguishing different probe states one can achieve PVM. In reality, measurements are usually
 not PVM reflecting non-zero overlap between the probe states. For detail study one can see \cite{busch3}.  
 
 Dichotomic POVM are characterised by two real parameters dubbed as sharpness and biasedness \cite{busch2, yu}. These two parameters can be linked with different types of nonidealness, with respect to ideal projective measurement, involved in the real experimental scenario. Sharpness characterizes measurement precision which is related to the overlap between non-orthgonal probe states \cite{mal1}. On the other hand, biasedness can be linked with the error in alignment of Stern-Gerlach (SG) apparatus or the deviation from the Gaussian nature of spatial wave-packet of incident spin-$\frac{1}{2}$ particles which is recently shown by some of us \cite{mal}. Any physical system is described with respect to some reference frame. For example, spin direction is defined with respect to some gyroscope in the laboratory instead of any purported absolute Newtonian space. Setting up SG apparatus requires a attached reference frame with respect to which the direction of inhomogeneous magnetic field, incident particle beam, position of screen are defined. Incident particle beam may not pass through the center of SG apparatus due to some alignment problem which reflects in the biasedness of POVM measured with such non-ideal device. In information processing tasks communication between different observers without shared reference frame is an interesting area of research and for detailed study one can see \cite{rev}. These two quantities, therefore, have well defined physical interpretations beyond mathematical constructions.
 
It is known that decreasing the sharpness of measurements reduces the possibility of obtaining  QM violations of Bell-CHSH inequality \cite{busch4, guru} or Leggett-Garg inequality (LGI) \cite{olgi, saha, das} and below a certain value of the sharpness parameter, violations of these inequalities are not obtained. In the most general formulation of dichotomic measurements, as we have just mentioned, there is another parameter, apart from sharpness parameter, which is known as biasedness parameter.

In this paper we explore the effect of biasedness of measurements on probing quantumness in the context of QM violation of CHSH inequality as well as in the context of QM violations of three inequivalent necessary conditions for MR, namely LGI \cite{leggett'85}, Wigner's form of the Leggett-Garg inequality (WLGI) \cite{saha} and the condition of no-signalling in time (NSIT) \cite{nsit}. Inequivalence of these necessary conditions of MR has been studied \cite{swati1} and it was recently shown \cite {swati2} that for a particular biased unsharp measurement there exists a state of two level system for which all these necessary conditions of MR are violated for any non-zero value of the sharpness parameter. In another work \cite{bum}  the effect of biasedness over that of unsharpness of multi-outcome spin measurements has been explored for multilevel spin systems considering a particular measurement scheme.

 In case of spatial correlations we find out the effect of the biasedness parameter on the minimum value of the sharpness parameter over which the QM violation of CHSH inequality persists. Furthermore, we derive the necessary and sufficient condition for the violation of CHSH inequality with biased unsharp measurements at one side and projective measurements at another side. As a corollary of this derivation we find out that the violation of the CHSH inequality cannot be enabled with dichotomic POVMs at one side if that is not enabled with projective measurements on both sides. This result is consistent with previous findings \cite{clave}. In case of temporal correlations we find out the effect of the biasedness parameter on the minimum values of the sharpness parameter over which QM violations of different necessary conditions of MR persist. Thus it is shown that there is a trade-off between the sharpness parameter and the biasedness parameter characterizing arbitrary dichotomic POVM in the context of probing QM violations of local-realist (LR) and macro-realist (MR) inequalities. It is also observed that the above mentioned trade-off is more robust in the latter case which means that the effect of biasedness parameter counters the effect of unsharpness of measurements more in the latter case.
 
We organize this paper in the following way. We briefly discuss  the characterization of the most general dichotomic POVM in Section II. In Section III, we consider the QM violation of the CHSH inequality with most general dichotomic POVM at one side. Then in Section IV we show the trade-off between sharpness and biasedness parameter in probing the QM violations of three inequivalent necessary conditions of MR, i.e., LGI, WLGI, NSIT. Section V contains discussion and concluding remarks.

\section{Generalized dichotomic measurements}
Projective valued measurement (PVM) is a set of projectors that add to identity, i.e., $A\equiv \lbrace P_{i}\vert\sum P_{i}=\mathbb{I}, P_i^2 = P_i \rbrace$ (where $P_i$s are
projectors). The probability of getting the $i$-th outcome is given by, $\Tr[ \rho P_i ]$ for the state $\rho$.

On the other hand, positive operator valued measurement (POVM) is a set of positive operators that add to identity, i.e., $E\equiv \lbrace E_{i}\vert\sum E_{i}=\mathbb{I},0< E_i\leq \mathbb{I}\rbrace$.  The probability of getting the $i$-th outcome is $\Tr[\rho E_{i}]$. Effects ($E_{i}s$) represent quantum events that may occur as outcomes of a measurement.

 In case of dichotomic measurements, the most general POVM is characterized by two parameters - sharpness parameter ($\lambda$) and biasedness parameter ($\gamma$) and the corresponding effect operators are given by,
 \begin{equation}
 \label{ub}
 E^{\pm} = \lambda P^{\pm} + \frac{1 \pm \gamma - \lambda}{2} \mathbb{I},
 \end{equation}
where $P^{+}$ and $P^{-}$ are sharp projectors corresponding to the two outcomes $+$ and $-$ respectively. For $E^{\pm}$ being valid effect operators, the positivity ($E^{\pm} \geq 0$) and normalisation condition ($E^{+} + E^{-} = \mathbb{I}$) have to be satisfied. From these conditions it is followed that $|\lambda|+|\gamma|\leq 1$. Sharpness parameter ($\lambda$) characterizes the measurement precision which is related to the overlap between non-orthgonal probe states \cite{mal1}. We consider $\lambda$ to be positive as negative value of the sharpness parameter has no physical meaning. Eq.(\ref{ub}) with $\gamma = 0$ gives unbiased unsharp measurement, which is also a dichotomic POVM, but not the most general one \cite{busch3}. For unbiased unsharp measurement $(1 - \lambda)$ characterizes the amount of unsharpness associated with the measurement.

\section{QM violation of local-realism with most generalized dichotomic measurements}
Quantum mechanical predictions are incompatible with local realist theory, which is probed through QM violation of Bell-CHSH inequality. Let us consider two spatially seperated parties, say Alice and Bob. Alice performs two dichotomic observables $A$ and $A^{'}$; Bob performs two dichotomic observables $B$ and $B^{'}$. In this scenario the CHSH inequality \cite{chsh} is given by
 \begin{equation}\label{chs}
 \langle A B\rangle + \langle A B^{'}\rangle+\langle A^{'} B\rangle -\langle A^{'} B^{'}\rangle\leq 2.
 \end{equation}
 $\langle A B\rangle$ is the correlation between measurements of dichotomic observables $A$ and $B$ performed by Alice and Bob respectively. Other terms in Inequality (\ref{chs}) are similarly defined.
 
There exist states and observables such that the maximum value of the left hand side (LHS) of CHSH inequality (\ref{chs}) is given by $2\sqrt{2}$ and in quantum theory $2\sqrt{2}$ is the maximum possible value of the CHSH expression which is known as Cirelson's bound \cite{cirelson}. It is known that for unbiased unsharp measurements at one side ($0 \leq \lambda \leq 1$ and $\gamma=0$) and projective measurements at another side the CHSH inequality is not violated for $\lambda \leq \frac{1}{\sqrt{2}}$ \cite{busch4, guru}.

Now to explore the role of biasedness over unsharpness in the context of QM violation of CHSH inequality, we consider that Alice performs biased
unsharp measurements (\ref{ub}) whereas Bob's
measurements are projective. We consider the question how the minimum value of the sharpness parameter, above which the QM violation of CHSH inequality persists, is modified by the biasedness parameter $\gamma$ of Alice's biased unsharp measurement. For singlet state we find that the CHSH inequality is not violated for $\lambda \leq \frac{1}{\sqrt{2}}$ whatever be the value of $\gamma$.

Now we consider arbitrary two qubit state under biased unsharp measurements at Alice's side. As biased unsharp measurement is the most general
POVM for two outcome scenario, this actually establishes
an effective criteria whether a given state violates CHSH
inequality under the consideration of POVMs at one side. In \cite{Horodecki'1995} Horodecki family established the necessary and sufficient criteria
for the QM violation of CHSH inequality by any two-qubit state under projective measurements.
In a similar spirit, we derive the necessary and sufficient criteria for the QM violation of CHSH inequality by any two-qubit system under biased unsharp measurements performed by Alice.

Any arbitrary state in $\mathcal{H}(=\mathbb{C}^2\otimes\mathbb{C}^2)$ can be expressed in terms of Hilbert-Schmidt basis as
\begin{eqnarray}\label{rho}
\rho=\frac{1}{4}(\mathbb{I}\otimes\mathbb{I}+\vec{r} \cdot \vec{\sigma}\otimes\mathbb{I}+\mathbb{I}\otimes\vec{s} \cdot \vec{\sigma}+\sum_{i,j=1}^{3}t_{ij}\sigma_i\otimes\sigma_j).
\end{eqnarray}
Here $\mathbb{I}$ is identity operator acting on $\mathbb{C}^2$, $\sigma_i$s are three Pauli matrices and $\vec{r}, \vec{s}$ are vectors in $\mathbb{R}^3$ with norm less than equal to unity. $\vec{r} \cdot \vec{\sigma} = \sum_{i=1}^{3} r_i \sigma_i$. $\vec{s} \cdot \vec{\sigma} = \sum_{i=1}^{3} s_i \sigma_i$. The coefficients $t_{ij}$ = $\Tr ( \rho \sigma_i \otimes \sigma_j )$ form a real matrix which we shall denote by $T$. $T$ is called the correlation matrix.  In addition, for being a valid density matrix, $\rho$ has to be normalized and positive semi-definite.

Let us define a matrix $V=TT^t$, where $T$ is the correlation matrix of the state (\ref{rho}) with coeffiecient $t_{ij}$ = $\Tr( \rho \sigma_i \otimes \sigma_j )$, $T^t$ represents transposition of $T$. The $3 \times 3$ real matrix $V$ is a symmetric one, so it can be diagonalized. Let $v, \tilde{v}$ are the two greatest, obviously positive, eigenvalues of the matrix $V$ \cite{Horodecki'1995}. We define a quantity, 
\begin{eqnarray}
M(\rho)=v+\tilde{v}.
\end{eqnarray}

\begin{theorem}
There exist biased unsharp measurements with sharpness parameter $\lambda$ and biasedness parameter $\gamma$ at Alice's side and projective measurements at Bob's side for which any two qubit density matrix (\ref{rho}) violates the CHSH inequality  iff $\lambda\sqrt{M(\rho)}+ | \gamma | |\vec{s}|>1$,  where $\vec{s} = (s_1, s_2, s_3)$ is a vector in $\mathbb{R}^3$ with norm less than equal to unity; $s_i =  \Tr( \rho \mathbb{I} \otimes \sigma_i )$ ($i = 1, 2, 3$).
\end{theorem}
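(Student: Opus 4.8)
The plan is to reduce the existence question to an optimisation over Bloch vectors and then run a Horodecki-type argument enriched by the bias term. First I would translate Alice's POVM into an effective dichotomic observable: from Eq.~(\ref{ub}) her $\pm$ outcomes are summarised by $A:=E^{+}-E^{-}=\lambda\,\hat a\cdot\vec\sigma+\gamma\,\mathbb I$, with $\hat a$ the unit Bloch vector of $P^{+}$, and similarly $A'=\lambda\,\hat a'\cdot\vec\sigma+\gamma\,\mathbb I$, while Bob's sharp observables are $B=\hat b\cdot\vec\sigma$, $B'=\hat b'\cdot\vec\sigma$. Inserting the Hilbert--Schmidt form (\ref{rho}) into $\langle AB\rangle=\Tr[\rho\,A\otimes B]$ and using $\Tr[\rho\,\sigma_i\otimes\sigma_j]=t_{ij}$ and $\Tr[\rho\,\mathbb I\otimes\sigma_j]=s_j$ gives $\langle AB\rangle=\lambda\,\hat a^{t}T\hat b+\gamma\,\vec s\cdot\hat b$. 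Writing $\vec c=\hat b+\hat b'$ and $\vec c'=\hat b-\hat b'$ (so $\vec c\cdot\vec c'=0$ and $|\vec c|^{2}+|\vec c'|^{2}=4$), the left-hand side of (\ref{chs}) becomes
\begin{equation}
S=\lambda\,\hat a\cdot T\vec c+\lambda\,\hat a'\cdot T\vec c'+\gamma\,\vec s\cdot(\vec c+\vec c').
\end{equation}

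Next I would prove the ``only if'' direction by bounding $S$. Since the bias term carries no $\hat a,\hat a'$, maximising over Alice's directions is immediate, $\max_{\hat a}\hat a\cdot T\vec c=|T\vec c|$, so $S\le\lambda(|T\vec c|+|T\vec c'|)+\gamma\,\vec s\cdot(\vec c+\vec c')$. For the correlation part I would reprove Horodecki's estimate: diagonalise $V=TT^{t}$ with ordered eigenvalues $v\ge\tilde v\ge v_{3}\ge0$, set $|\vec c|=2\cos\theta$, $|\vec c'|=2\sin\theta$ with the unit vectors $\vec c/|\vec c|\perp\vec c'/|\vec c'|$, and apply the Cauchy--Schwarz inequality twice to get $|T\vec c|+|T\vec c'|\le 2\sqrt{v+\tilde v}=2\sqrt{M(\rho)}$, saturated along the top two eigenvectors of $V$ with $\tan\theta=\sqrt{\tilde v/v}$. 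For the bias part, $\vec s\cdot(\vec c+\vec c')=2\,\vec s\cdot\hat b\le 2|\vec s|$, and the replacement $\hat b\mapsto-\hat b$ merely swaps $\vec c\leftrightarrow-\vec c'$ while leaving $|T\vec c|+|T\vec c'|$ invariant, so $\gamma$ may be traded for $|\gamma|$. Combining, $\max S\le 2\lambda\sqrt{M(\rho)}+2|\gamma|\,|\vec s|$, whence a violation of (\ref{chs}) requires $\lambda\sqrt{M(\rho)}+|\gamma|\,|\vec s|>1$.

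The ``if'' direction is the delicate one: one must exhibit, whenever $\lambda\sqrt{M(\rho)}+|\gamma|\,|\vec s|>1$, explicit $\hat a,\hat a',\hat b,\hat b'$ achieving $S>2$, i.e.\ show the above bound is attained. The tension is that $|T\vec c|+|T\vec c'|$ is maximised by $\vec c,\vec c'$ aligned with the leading eigenvectors of $V=TT^{t}$, while $\vec s\cdot\hat b$ is maximised by $\hat b\parallel\vec s$, and these two prescriptions generically conflict, so this step carries the real content. My plan here would be: (i) note that $\lambda+|\gamma|\le1$ together with the hypothesis forces $M(\rho)>1$, so a CHSH violation is already present in the sharp limit and the top eigenplane of $V$ is stable; (ii) construct a one-parameter family of settings interpolating between the Horodecki-optimal configuration and the configuration with $\hat b\parallel\vec s$, with $\hat b,\hat b'$ taken in the plane spanned by the leading eigenvector of $V$ and the component of $\vec s$ transverse to it, and $\hat a,\hat a'$ aligned with $T\vec c,T\vec c'$; (iii) evaluate $S$ along this family, using the positivity of $\rho$ --- which limits how much weight $\vec s$ can carry along the small-eigenvalue direction of $V$ once $M(\rho)>1$ --- to certify that $S$ still exceeds $2$. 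Carrying step (iii) out uniformly in $(\lambda,\gamma)$ and over all admissible $\rho$ is the main obstacle; granted it, comparison of $2\lambda\sqrt{M(\rho)}+2|\gamma|\,|\vec s|$ with the local-realist bound $2$ delivers the stated equivalence.
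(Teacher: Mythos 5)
Your reduction of the CHSH functional to $S=\lambda\,\hat a\cdot T\vec c+\lambda\,\hat a'\cdot T\vec c'+\gamma\,\vec s\cdot(\vec c+\vec c')$ and your derivation of the upper bound $S\le 2\lambda\sqrt{M(\rho)}+2|\gamma|\,|\vec s|$ coincide with the paper's calculation leading to Eq.~(\ref{nsp}), so the ``violation $\Rightarrow$ condition'' half is in order. The genuine gap is the one you flag yourself: sufficiency requires the bound to be \emph{attained}, and your steps (i)--(iii) are a plan, not a proof. The obstruction is real: the correlation part wants $\hat c,\hat c'$ spanning the top eigenplane of $V=TT^{t}$ with $\tan\theta=\sqrt{\tilde v/v}$, while the bias part wants $\hat b=\cos\theta\,\hat c+\sin\theta\,\hat c'$ parallel to $\vec s$, and nothing in your argument guarantees these prescriptions are compatible for a general $\rho$. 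As written, you have proved only that $\lambda\sqrt{M(\rho)}+|\gamma|\,|\vec s|>1$ is necessary for a violation, not that it is sufficient, so the ``iff'' is not established.

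You should know, though, that the paper's own proof does not close this gap either. In the chain of equalities culminating in Eq.~(\ref{nsp}), the maximum over $\theta$ of $\bigl(\lambda\|T\hat c\|+\gamma\,\hat c\cdot\vec s\bigr)\cos\theta+\bigl(\lambda\|T\hat c'\|+\gamma\,\hat c'\cdot\vec s\bigr)\sin\theta$, which is the single square root $\sqrt{(\lambda\|T\hat c\|+\gamma\,\hat c\cdot\vec s)^2+(\lambda\|T\hat c'\|+\gamma\,\hat c'\cdot\vec s)^2}$, is silently replaced by the sum of two square roots --- Minkowski's inequality only gives $\le$ there, with equality iff $(\|T\hat c\|,\|T\hat c'\|)$ and $\gamma(\hat c\cdot\vec s,\hat c'\cdot\vec s)$ are nonnegatively proportional --- and the final step takes the suprema of the two resulting terms over $\hat c,\hat c'$ independently, which additionally requires $\vec s$ to lie in the top eigenplane of $V$. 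In other words, the tension you identified is exactly the step the paper asserts without justification; your refusal to assert it is not a defect relative to the paper, but neither text proves the ``if'' direction. To salvage sufficiency you would have to show that positivity of $\rho$ forces the simultaneous attainability (your step (iii)), or else weaken the statement to a necessary condition for violation.
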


\begin{proof}
Consider that the measurements performed at both sides are projective. In this case let us assume that $Q=\hat{q}.\vec{\sigma}$, where $Q\in\{A, A^{\prime}, B, B^{\prime}\}$ and  $\hat{q}\in\{\hat{a}, \hat{a^{\prime}}, \hat{b}, \hat{b^{\prime}}\}$; $\hat{a}, \hat{a^{\prime}}, \hat{b}, \hat{b^{\prime}}$ are unit vectors in $\mathbb{R}^3$. Using these the Bell operator corresponding to the CHSH inequality (\ref{chs}) can be expressed as,
\begin{eqnarray}
\mathcal{B}_{CHSH}^{\lambda = 1,\gamma = 0}= \hat{a}.\vec{\sigma}\otimes( \hat{b}+\hat{b}^{\prime}).\vec{\sigma}+\hat{a}^{\prime}.\vec{\sigma}\otimes( \hat{b}-\hat{b}^{\prime}).\vec{\sigma}.
\end{eqnarray}
Now, with biased unsharp measurements (with sharpness parameter $\lambda$ and biasedness parameter $\gamma$) at Alice's side, the expectation value of the Bell operator corresponding to the CHSH inequality (\ref{chs}) becomes
\begin{eqnarray}
\label{chshbu}
\langle\mathcal{B}_{CHSH}^{\lambda, \gamma }\rangle=\lambda \langle\mathcal{B}_{CHSH}^{\lambda = 1,\gamma = 0}\rangle +2 \gamma \hat{b}.\vec{s},
\end{eqnarray}
where $\langle\mathcal{B}_{CHSH}^{\lambda = 1,\gamma = 0}\rangle$ is the expectation value of the Bell operator corresponding to the CHSH inequality (\ref{chs}) when both parties perform projective measurements and it is given by \cite{Horodecki'1995},
\begin{equation}
\langle\mathcal{B}_{CHSH}^{\lambda = 1,\gamma = 0}\rangle =  ( \hat{a},T(\hat{b} + \hat{b^{'}})) +  ( \hat{a},T(\hat{b} - \hat{b^{'}})).
\end{equation}
Here $( \hat{a},T(\hat{b} + \hat{b^{'}}))$ and $( \hat{a},T(\hat{b} - \hat{b^{'}}))$ present Euclidean scalar products in $\mathbb{R}^3$.
We have to maximize the quantity given by Eq.(\ref{chshbu}) over all measurement settings. Following the prescription described in \cite{Horodecki'1995}, let us take $\hat{b}+\hat{b}^{\prime}=2\cos\theta\hat{c}$ and $\hat{b}-\hat{b}^{\prime}=2\sin\theta\hat{c}^{\prime}$. Where $\hat{c}, \hat{c}^{\prime}$ are mutually orthogonal unit vectors  in $\mathbb{R}^3$; $\theta \in [0, \pi/2]$. With this choice the maximum expectation value $\langle\mathcal{B}_{CHSH}^{\lambda, \gamma }\rangle_{max}$ over all possible measurement settings is given by,

\begin{eqnarray}
\langle\mathcal{B}_{CHSH}^{\lambda,\gamma}\rangle_{max} = \mathrm{max}_{\hat{a},\hat{a}^{\prime},\hat{c},\hat{c}^{\prime},\theta} \big[ 2\lambda \{ (\hat{a},T\hat{c})\cos\theta+(\hat{a}^{\prime},T\hat{c}^{\prime})\sin\theta \}
+2\gamma(\cos\theta\hat{c}+\sin\theta\hat{c}^{\prime}).\vec{s} \big].
\label{neee}
\end{eqnarray}

In order to maximize over $\hat{a},\hat{a}^{\prime}$, we choose these to be in the direction of $T\hat{c}, T\hat{c}^{\prime}$ respectively. Hence, from Eq.(\ref{neee}) we obtain

\begin{align}
\langle\mathcal{B}_{CHSH}^{\lambda,\gamma}\rangle_{max} & =  \mathrm{max}_{\hat{c},\hat{c}^{\prime},\theta}[2\lambda(||T\hat{c}||\cos\theta +||T\hat{c}^{\prime}||\sin\theta)+
2(\cos\theta \gamma \hat{c}.\vec{s}+\sin\theta \gamma \hat{c}^{\prime}.\vec{s})] \nonumber\\
&=\mathrm{max}_{\hat{c},\hat{c}^{\prime}}[2\lambda\sqrt{||T\hat{c}||^2+||T\hat{c}^{\prime}||^2}+2 \sqrt{|\gamma \hat{c}.\vec{s}|^2+| \gamma \hat{c}^{\prime}.\vec{s}|^2}] \nonumber\\
&=2\lambda\sqrt{M(\rho)}+2 |\gamma | |\vec{s}|.
\label{nsp}
\end{align}

Here, $||T\hat{c}||$ presents Euclidean norm in $\mathbb{R}^3$; $||T\hat{c}||^2 = ( \hat{c},T^{t} T\hat{c})$. We have used $\max \limits_{\hat{c},\hat{c}^{\prime}} \large( ||T\hat{c}||^2+||T\hat{c}^{\prime}||^2 \large)$ = $v+\tilde{v}$ = $M(\rho)$ following \cite{Horodecki'1995}. Hence, in the present context the bipartite qubit state (\ref{rho}) violates the CHSH inequality iff the maximum value of the LHS of CHSH inequality is greater than $2$, i. e., 
\begin{eqnarray}
A(\rho, \lambda, \gamma) = \lambda\sqrt{M(\rho)}+ |\gamma | |\vec{s}|>1.
\label{nsm}
\end{eqnarray}
This is the necessary and sufficient criteria determining whether a given bipartite qubit state violates CHSH inequality when  Alice performs the most generalized dichotomic measurements, i.e., biased unsharp measurements and Bob performs projective measurements.
\end{proof}

The above theorem implies the following corollary:

\begin{corollary}
If a bipartite qubit state $\rho$ does not violate CHSH inequality for projective measurements on both sides, then the state will not violate CHSH inequality with POVMs at one side.
\end{corollary}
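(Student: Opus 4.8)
The plan is to derive the corollary directly from the Theorem together with the Horodecki criterion \cite{Horodecki'1995}, with no further computation. First I would recall that, by the Horodecki criterion, a two-qubit state $\rho$ violates the CHSH inequality under projective measurements on \emph{both} sides if and only if $M(\rho) > 1$. Hence the hypothesis of the corollary — that $\rho$ does not violate CHSH for projective measurements on both sides — is equivalent to $M(\rho) \le 1$, i.e. $\sqrt{M(\rho)} \le 1$.

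Next I would invoke the Theorem. Since the biased unsharp measurement (\ref{ub}) is the most general two-outcome POVM, and a CHSH test uses only dichotomic observables, the Theorem already gives an exhaustive characterization of when $\rho$ can be made to violate CHSH with a dichotomic POVM on Alice's side and a projective measurement on Bob's side: this happens precisely when $A(\rho,\lambda,\gamma) = \lambda\sqrt{M(\rho)} + |\gamma|\,|\vec{s}| > 1$ for some admissible pair $(\lambda,\gamma)$. The remaining task is therefore just to bound $A(\rho,\lambda,\gamma)$ from above using the admissibility constraints from Section II: $\lambda \ge 0$ together with $|\lambda| + |\gamma| \le 1$ gives $\lambda + |\gamma| \le 1$, while $|\vec{s}| \le 1$ holds for any valid density matrix (\ref{rho}). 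Combining these with $\sqrt{M(\rho)} \le 1$ yields
\begin{equation}
A(\rho,\lambda,\gamma) = \lambda\sqrt{M(\rho)} + |\gamma|\,|\vec{s}| \le \lambda + |\gamma| \le 1
\end{equation}
for every admissible $(\lambda,\gamma)$, so the strict inequality $A(\rho,\lambda,\gamma) > 1$ can never be satisfied, and by the Theorem $\rho$ does not violate CHSH with a POVM on one side.

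There is essentially no hard step here — the argument is a one-line estimate once the Theorem is in hand. The only point needing care is the bookkeeping observation that the class of measurements (\ref{ub}) is genuinely the full set of two-outcome qubit POVMs relevant to a dichotomic CHSH test, so that applying the Theorem (rather than re-deriving anything) is legitimate. I would also note in passing that this conclusion is consistent with the known fact that POVMs offer no advantage over projective measurements for CHSH-type scenarios \cite{clave}.
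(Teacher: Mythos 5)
Your proof is correct and reaches the conclusion by the same overall strategy as the paper: invoke the Theorem's criterion $A(\rho,\lambda,\gamma)=\lambda\sqrt{M(\rho)}+|\gamma|\,|\vec{s}|>1$ and show it cannot be met when $M(\rho)\le 1$. Where you differ is in the bounding step. The paper splits into four cases according to the sign of $\gamma$ and whether $\lambda+|\gamma|$ equals or is strictly less than $1$, and within each case into two subcases according to whether $\sqrt{M(\rho)}\ge|\vec{s}|$; in effect it establishes the convex-combination bound $A(\rho,\lambda,\gamma)\le\max\{\sqrt{M(\rho)},|\vec{s}|\}$. Your one-line estimate $\lambda\sqrt{M(\rho)}+|\gamma|\,|\vec{s}|\le\lambda+|\gamma|\le 1$, using only $\sqrt{M(\rho)}\le 1$, $|\vec{s}|\le 1$, $\lambda\ge 0$ and the admissibility constraint $|\lambda|+|\gamma|\le 1$, yields a weaker intermediate bound but is all the corollary requires, and it dispenses with the case analysis entirely --- a genuine simplification. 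Your preliminary observations (that the hypothesis is equivalent to $M(\rho)\le 1$ by the Horodecki criterion, and that the measurements (\ref{ub}) exhaust the dichotomic POVMs so the Theorem legitimately applies) are exactly the points the paper also relies on. One caveat you share with the paper: the Theorem, and hence both proofs, covers a POVM on Alice's side together with projective measurements on Bob's side, which is how the paper reads ``POVMs at one side.''
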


\begin{proof}
For $E^{\pm}$ in Eq.(\ref{ub}) being valid effect operators, $\lambda$ and $\gamma$ must satisfy $|\lambda|+|\gamma|\leq 1$. Moreover, we consider sharpness parameter $\lambda$ to be positive as negative value of the sharpness parameter has no physical meaning \cite{mal1}. Let us assume that the arbitrary bipartite qubit state $\rho$ given by Eq.(\ref{rho}) does not violate the CHSH inequality with projective measurements at both sides, i. e., $M(\rho) \leq 1$. Now consider that Alice performs biased unsharp measurements with sharpness parameter $\lambda$ and biasedness parameter $\gamma$, Bob performs projective measurements. In this scenario there are the following four possible cases:\\ 

\emph{Case 1. $\lambda \geq 0$, $\gamma >0, \lambda +\gamma =1$}: In this case\\ 
i) Consider that $ \sqrt{M(\rho)}-|\vec{s}|=\epsilon \geq 0$. With this $A(\rho, \lambda, \gamma)$ becomes $\sqrt{M(\rho)}-\epsilon (1-\lambda)\leq \sqrt{M(\rho)} \leq 1$.\\
ii) Consider that $\sqrt{M(\rho)}<|\vec{s}|$. Hence, $A(\rho, \lambda, \gamma)$=  $|\vec{s}|-\lambda(|\vec{s}|-\sqrt{M(\rho)}) \leq |\vec{s}| \leq $ $1$. \\

\emph{Case 2. $\lambda \geq 0$, $\gamma >0, \lambda +\gamma <1$}: In this case\\ 
i) Consider that $ \sqrt{M(\rho)}-|\vec{s}|=\epsilon \geq 0$. With this $A(\rho, \lambda, \gamma)$ becomes $(\lambda +\gamma)\sqrt{M(\rho)}-\gamma\epsilon<\sqrt{M(\rho)} \leq 1$. \\
ii) Consider that $\sqrt{M(\rho)}<|\vec{s}|$, $ |\vec{s}| - \sqrt{M(\rho)} = \epsilon > 0$. Hence,  $A(\rho, \lambda, \gamma)=(\lambda +\gamma)|\vec{s}|-\lambda\epsilon < |\vec{s}| \leq$ $1$. \\

\emph{Case 3. $\lambda \geq 0$, $\gamma \leq 0, \lambda - \gamma =1$}: In this case\\ 
i) Consider that $ \sqrt{M(\rho)}-|\vec{s}|=\epsilon \geq 0$. With this $A(\rho, \lambda, \gamma)$ becomes $\sqrt{M(\rho)}-\epsilon (1-\lambda)\leq \sqrt{M(\rho)} \leq 1$.\\
ii) Consider that $\sqrt{M(\rho)}<|\vec{s}|$, $ |\vec{s}| - \sqrt{M(\rho)} = \epsilon > 0$. Hence,  $A(\rho, \lambda, \gamma)$=  $|\vec{s}|-\lambda \epsilon \leq |\vec{s}| \leq $ $1$. \\

\emph{Case 4.  $\lambda \geq 0$,  $\gamma \leq 0$, $\lambda -\gamma <1$}: In this case\\ 
i) Consider that $ \sqrt{M(\rho)}-|\vec{s}|=\epsilon \geq 0$. With this $A(\rho, \lambda, \gamma)$ becomes $(\lambda - \gamma)\sqrt{M(\rho)}-|\gamma | \epsilon<\sqrt{M(\rho)} \leq 1$. \\
ii) Consider that $\sqrt{M(\rho)}<|\vec{s}|$, $ |\vec{s}| - \sqrt{M(\rho)} = \epsilon > 0$. Hence,  $A(\rho, \lambda, \gamma)=(\lambda - \gamma)|\vec{s}|-\lambda\epsilon < |\vec{s}| \leq$ $1$. \\

Hence, we find that in all possible four cases $A(\rho, \lambda, \gamma) \leq 1$ if $M(\rho) \leq 1$. It can, therefore, be concluded that, for any two qubit state, there is no violation of CHSH inequality with POVMs at one side if that is not enabled with PVMs at both sides.

One important point to be stressed here is that we have considered only positive values of  sharpness parameter $\lambda$ as negative value of the sharpness parameter has no physical meaning \cite{mal1}. However, if one is interested to consider negative values of $\lambda$ also, then it can be checked that the above result remains unchanged.

\end{proof}

This result is consistent with what was obtained in \cite{clave}, where it was shown that projective measurements are sufficient for obtaining maximal violation of CHSH inequality considering optimal strategy of binary game and using the relation between binary XOR game (which is one particular nonlocal game mentioned in the introduction) with the CHSH value. On a related note it should be mentioned that, in the context of self-testing of binary observables considering Bell-CHSH violation, it has been shown \cite{jed} that when the maximal violation of the CHSH inequality is obtained, the certified local observables indicate projective measurements performed by the parties.

From Eq.(\ref{nsp}) it is found that the maximum value of the LHS of CHSH inequality (\ref{chs}) for the state (\ref{rho}) when both parties perform projective measurements ($\lambda = 1$, $\gamma = 0$) is given by $2 \sqrt{M(\rho)}$ and that when Alice performs unbiased unsharp measurements ($0 \leq \lambda \leq 1$, $\gamma=0$) is given by $2 \lambda \sqrt{M(\rho)}$. Hence, unsharpness of Alice's measurements (quantified by ``$1-\lambda$") decreases the maximum value of the LHS of CHSH inequality (\ref{chs}). The maximum value of the LHS of CHSH inequality (\ref{chs}) for the state (\ref{rho}) when Alice performs biased unsharp measurements is given by $2 \lambda \sqrt{M(\rho)} + 2 | \gamma | | \vec{s} |$. Hence, for a given unsharpness (i. e., for a given $\lambda$) of Alice's measurements, the maximum value of the LHS of CHSH inequality (\ref{chs}) for the state (\ref{rho}) is increased by incorporating any nonzero biasedness $\gamma$. Now, unsharpness and biasedness of a measurement characterizes different types of non-idealness involved in the measurement in real experimental scenario \cite{mal1, mal}. Hence, it can be concluded that non-idealness linked with unsharpness of the measurements reduces the quantum effect in the form of QM violation of CHSH inequality (\ref{chs}) by any two-qubit system. On the other hand, introducing non-idealness as characterized by biasedness in the measurements enables revelation of quantum effects even with a value of sharpness parameter for which such quantum effects in the unbiased scenario is not observed. Thus the trade-off arises.

\section{QM violation of macro-realism with most generalized dichotomic measurements}

In order to probe macroscopic coherence Leggett and Garg introduces an inequality (LGI) \cite{leggett'85} which is a consequence of conjunction of two assumptions, namely, (A1) {\it{Macrorealism per se}} and (A2) {\it{Non-Invasive Measurability}} (NIM). (A1) implies a macroscopic object, which has available to it two or more macroscopically distinct states, is at any given time in a definite one of those states. NIM states that it is possible, in principle, to determine which of the states the system is in, without affecting the state itself or the system's subsequent evolution. We denote these two assumptions together as macro-realism (MR). One can replace (A1) by a stronger assumption of {\it{realism}} which states that at any instant, irrespective of measurement, a system is in any one of the available definite states such that all its observable properties have definite values. Any model satisfying realism and NIM is known as non-invasive realist model which satisfies MR. Laws of classical physics satisfy these assumptions, whereas QM description of nature is incompatible with MR. In the present work we consider three inequivalent necessary conditions of MR namely LGI, WLGI \cite{saha} and NSIT \cite{nsit} to explore the role of biasedness of measurements over unsharpness of measurements. As our motivation here is to explore the role of biasedness of measurements, we consider microscopic system, say, qubit system rather than system with large dimension. 

To test falsifiability of the assumptions of MR sequential measurements are performed at different times on a system evolving under some Hamiltonian. From the measurement statistics two time correlation function is defined as $C_{ij}=\langle Q(t_i)Q(t_j)\rangle=\sum_{m_i,m_j=\pm}m_i m_j$ $p(Q(t_i) =m_i, Q(t_j) = m_j)$, where $p(Q(t_i) =m_i, Q(t_j) = m_j)$ is the joint probability of getting the outcomes $m_i$ and $m_j$ when the two-outcome ($\pm$) observable $Q$ is measured at instances $t_i$ and $t_j$ respectively. In this scenario the simplest form of Leggett-Garg inequality (LGI), which is a necessary condition of MR, is given by,
\begin{equation}\label{lgi}
K_{LGI}=C_{12}+C_{23}-C_{13}\leq 1.
\end{equation} 
Now consider precession of a spin-$1/2$ particle with initial state $\rho = \frac{1}{2} ( \mathbb{I} + \vec{r} \cdot \vec{\sigma})$  (where $\vec{r} \cdot \vec{\sigma} = r \sin \theta \cos \phi \sigma_x + r \sin \theta \sin \phi \sigma_y + r \cos \theta \sigma_z$; $0 \leq \theta \leq \pi$; $0 \leq \phi < 2 \pi$; $r$ is real and $0 \leq r \leq 1$) at instance $t_1$ under 
the unitary evolution $U_{t}=e^{-i\omega t\sigma_{x}/2}$, where $\omega$ is the 
angular precession frequency. 
Let us choose equidistant measurement times with time difference $\Delta t(=t_{2}-t_{1} = t_3 - t_2)$. For projective measurements the LHS of LGI (\ref{lgi}) is independent of the state parameters $r$, $\theta$, $\phi$. The maximum value taken by the LHS of LGI (\ref{lgi}) with projective measurements ($\lambda = 1$, $\gamma=0$) corresponding to the operator $\sigma_z$ at instances $t_1$, $t_2$ and $t_3$ is $\frac{3}{2}$. This happens for $\omega\Delta t = \frac{\pi}{6}$.

If the measurements corresponding to the operator $\sigma_z$ are unbiased unsharp ($0 \leq \lambda \leq 1$ and $\gamma=0$), then it has been shown that LGI given by Inequality (\ref{lgi}) is not violated for $\lambda \leq \sqrt{2/3}$ \cite{saha}.
\begin{figure}[t!]
    \begin{center}
    \resizebox{9cm}{9cm}{\includegraphics{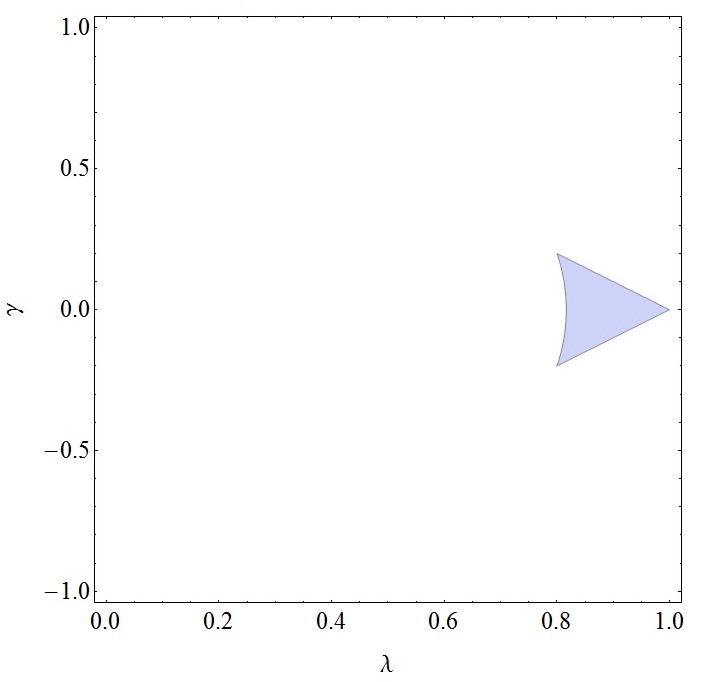}}
    \end{center}
    \caption{Blue shaded region indicates the range of $\lambda$ and $\gamma$ where QM violation of LGI given by Inequality (\ref{lgi}) is obtained with biased unsharp measurements.} \label{lgif}
  \end{figure}
Now consider the measurements corresponding to the operator $\sigma_z$ are biased unsharp. In this case the LHS of LGI given by Inequality (\ref{lgi}) is not independent of the state parameters. Let us choose $r=0$ and $\omega \Delta t = \frac{\pi}{6}$ (which gives maximum QM violation of the LGI (\ref{lgi}) as discussed before). With this the LHS of LGI (\ref{lgi}) is given by,
\begin{equation}
K_{LGI} = \frac{3}{2} \lambda^2 + \gamma^2.
\label{lgie}
\end{equation} 
From Eq.(\ref{lgie}) it is evident that $ K_{LGI} = \frac{3}{2}$ for projective measurements corresponding to the operator $\sigma_z$. Hence, in this case unsharpness of the measurements (quantified by ``$1-\lambda$") reduces the QM violation of LGI as $0 \leq \lambda \leq 1$. On the other hand, for a given unsharpness, incorporating any non-zero biasedness $\gamma$ in the measurements increases the QM violation of LGI.

In Fig.(\ref{lgif}) we show the region of $\lambda$ and $\gamma$ in which the QM violation of LGI given by inequality (\ref{lgi}) is obtained (constrained with the condition $|\lambda|+|\gamma| \leq 1$) using the expression of $K_{LGI}$ mentioned in Eq.(\ref{lgie}). From this Figure it is clear that introduction of biasedness parameter $\gamma$ enables to get QM violation of LGI given by inequality (\ref{lgi}) even for $\lambda \leq \sqrt{2/3}$.

We now consider the effect of biasedness parameter over that of sharpness parameter in the context QM violation of WLGI \cite{saha}, which is another necessary condition of MR.  We have chosen the following two partcular forms of WLGI giving the maximum QM violation \cite{saha}:
\begin{equation}
\label{wlgi1}
K_{WLGI1} = p ( Q_2 =+, Q_3=+) - p (Q_1 = -, Q_2 = +)
 - p (Q_1 = +, Q_3 = +) \leq 0,
\end{equation} 
and
\begin{equation}
\label{wlgi2}
K_{WLGI2} = p ( Q_2 =-, Q_3=-) - p (Q_1 = +, Q_2 = -) 
 - p (Q_1 = -, Q_3 = -) \leq 0.
\end{equation}
The maximum values of $K_{WLGI1}$ and $K_{WLGI2}$ mentioned in Inequalities (\ref{wlgi1}) and (\ref{wlgi2}) with projective measurements ($\lambda = 1$, $\gamma=0$) corresponding to the operator $\sigma_z$ at instances $t_1$, $t_2$ and $t_3$ are equal to $0.50$.

The maximum value of $K_{WLGI1}$ mentioned in Inequality (\ref{wlgi1}) occurs for $r =1$, $\theta=\frac{\pi}{3}$, $\phi=\frac{\pi}{2}$ and $\omega \Delta t = 0.56$. It has been shown that if the measurements corresponding to the operator $\sigma_z$ are unbiased unsharp ($0 \leq \lambda \leq 1$, $\gamma=0$), then WLGI given by inequality (\ref{wlgi1}) is not violated for $\lambda \leq 0.69$ \cite{saha}.

Now consider that the measurements corresponding to the operator $\sigma_z$ are biased unsharp. Let us take $r =1$, $\theta=\frac{\pi}{3}$, $\phi=\frac{\pi}{2}$ and $\omega \Delta t =0.56$ (which gives maximum QM violation of WLGI (\ref{wlgi1})). In this case the LHS of WLGI (\ref{wlgi1}) is given by,

\begin{equation}
\label{wlgie}
K_{WLGI1} = -\frac{1}{4}+ (\frac{\gamma}{2} + 0.61 \lambda)^2 - 0.12 \lambda \gamma
 + 0.19 \lambda (2 -  \sqrt{1-\gamma-\lambda} \sqrt{1-\gamma+\lambda}
-  \sqrt{1+\gamma-\lambda} \sqrt{1+\gamma+\lambda}).
\end{equation}

From Eq.(\ref{wlgie}) it can be checked that unsharpness of the measurements (quantified by ``$1-\lambda$") reduces the QM violation of WLGI (\ref{wlgi1}). On the other hand, for a given unsharpness, introducing any non-zero positive value of biasedness $\gamma$ increases the QM violation of WLGI (\ref{wlgi1}). In a similar way, it can be checked that, for a given unsharpness, introduction of any non-zero negative value of biasedness $\gamma$ increases the QM violation of WLGI (\ref{wlgi2}). 

In Fig.(\ref{wlgif}) we show the region of $\lambda$ and $\gamma$ in which the QM violation of WLGI given by Inequality (\ref{wlgi1}) is obtained (constrained with the condition $|\lambda|+|\gamma| \leq 1$) for $r =1$, $\theta=\frac{\pi}{3}$, $\phi=\frac{\pi}{2}$ and $\omega \Delta t =0.56$ (which gives maximum QM violation of WLGI (\ref{wlgi1})) using the expression of $K_{WLGI1}$ mentioned in Eq.(\ref{wlgie}). From this Figure it is clear that introduction of biasedness parameter $\gamma$ enables to get QM violation of WLGI given by Inequality (\ref{wlgi1}) even for $\lambda \leq 0.69$.
\begin{figure}[t!]
    \begin{center}
    \resizebox{9cm}{9cm}{\includegraphics{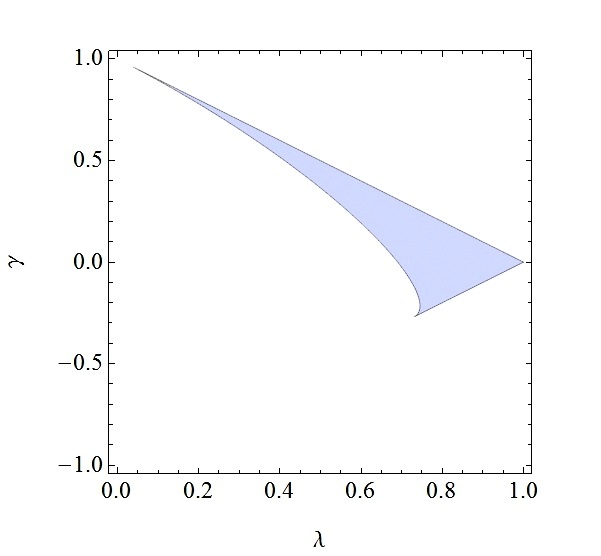}}
    \end{center}
    \caption{Blue shaded region indicates the range of $\lambda$ and $\gamma$ where QM violation of WLGI given by Inequality (\ref{wlgi1}) is obtained with biased unsharp measurements.} \label{wlgif}
  \end{figure} 

Note that in Fig.(\ref{wlgif}) the region of $\lambda$ and $\gamma$, in which QM violation of WLGI given by Inequality (\ref{wlgi1}) is enabled, is not symmetric with respect to $\gamma=0$. This is due to the particular choice of WLGI. QM violation of WLGI given by Inequality (\ref{wlgi1}) is enabled for a larger region of $\lambda$ and $\gamma$ when $\gamma$ is positive. Similarly, the region of $\lambda$ and $\gamma$, in which QM violation of WLGI given by Inequality (\ref{wlgi2}) is enabled, can be shown. This region for WLGI given by Inequality (\ref{wlgi2}) is the mirror image of the region for WLGI given by Inequality (\ref{wlgi1}) with respect to $\gamma=0$. Hence, the QM violation of WLGI given by Inequality (\ref{wlgi2}) is enabled for a larger region of $\lambda$ and $\gamma$ when $\gamma$ is negative.
 
We now consider the effect of biasedness parameter $\gamma$ over that of sharpness parameter $\lambda$ in the context QM violation of NSIT \cite{nsit}, which is another necessary condition of MR. We have chosen the following partcular form of NSIT giving the maximum QM violation \cite{saha}:
\begin{equation}
\label{nsit}
K_{NSIT} =  p(Q_1=-,Q_2=+) + p( Q_1=+, Q_2 =+) 
 - p(Q_2 = +) = 0.
\end{equation} 
The maximum value of $K_{NSIT}$ mentioned in Eq.(\ref{nsit}) with projective measurements ($\lambda = 1$, $\gamma=0$) corresponding to the operator $\sigma_z$ at instances $t_1$, $t_2$ is $0.5$. This maximum value of $K_{NSIT}$ occurs for $r =1$, $\theta=\frac{\pi}{2}$, $\phi=\frac{3 \pi}{2}$ and $\omega \Delta t = \frac{\pi}{4}$. 

\begin{figure}[t!]
    \begin{center}
     \resizebox{9cm}{9cm}{\includegraphics{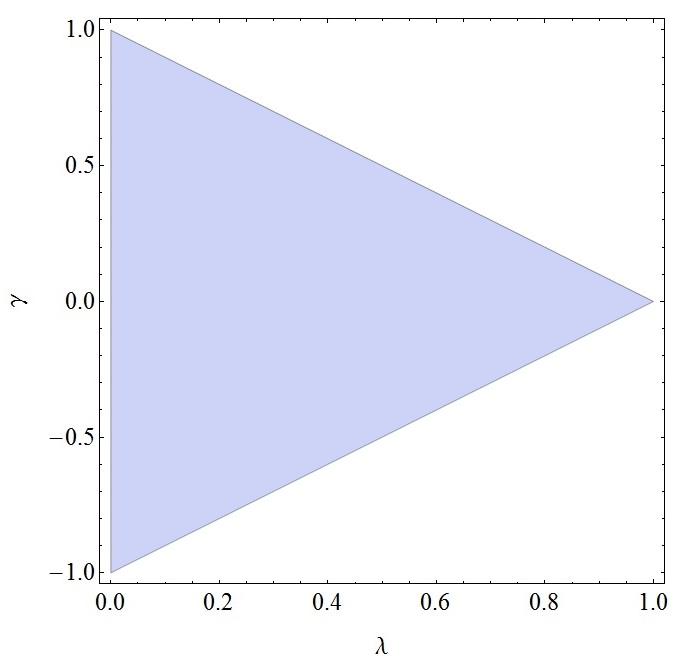}}
    \end{center}
    \caption{Blue shaded region indicates the range of $\lambda$ and $\gamma$ where QM violation of NSIT given by Eq.(\ref{nsit}) is obtained with biased unsharp measurements.} \label{nsitf}
  \end{figure}

It has been shown that if the measurements corresponding to the operator $\sigma_z$ are unbiased unsharp ($0 \leq \lambda \leq 1$ and $\gamma=0$), then NSIT given by Eq.(\ref{nsit}) is violated for $\lambda > 0$ \cite{saha}.

Now consider that the measurements corresponding to the operator $\sigma_z$ are biased unsharp. Let us take $r =1$, $\theta=\frac{\pi}{2}$, $\phi=\frac{3 \pi}{2}$ and $\omega \Delta t = \frac{\pi}{4}$ (which gives maximum QM violation of NSIT given by Eq.(\ref{nsit})). In this case the LHS of  NSIT (\ref{nsit}) is given by,
\begin{equation}
\label{nsie}
K_{NSIT} = \frac{1}{4} \lambda (2 - \sqrt{1-\gamma-\lambda} \sqrt{1-\gamma+\lambda}
 - \sqrt{1+\gamma-\lambda} \sqrt{1+\gamma+\lambda}) .
\end{equation} 
From Eq.(\ref{nsie}) it can be checked that unsharpness of the measurements (quantified by ``$1-\lambda$") reduces the QM violation of NSIT (\ref{nsit}). On the other hand, for a given unsharpness, introducing any nonzero biasedness $\gamma$ increases the QM violation of NSIT (\ref{nsit}).

In Fig.(\ref{nsitf}) we show the region of $\lambda$ and $\gamma$ in which the QM violation of NSIT given by Eq.(\ref{nsit}) is obtained (constrained with the condition $|\lambda|+|\gamma| \leq 1$) for $r =1$, $\theta=\frac{\pi}{2}$, $\phi=\frac{3 \pi}{2}$ and $\omega \Delta t = \frac{\pi}{4}$ (which gives maximum QM violation of  NSIT (\ref{nsit}))  using the expression of $K_{NSIT}$ mentioned in Eq.(\ref{nsie}).  From this Figure it is clear that introduction of biasedness parameter $\gamma$ makes no difference in the minimum value of $\lambda$ above which  NSIT (\ref{nsit}) is violated. Since, for $\gamma = 0$, NSIT is violated for any $\lambda > 0$, there is no scope to decrease the minimum value of $\lambda$ above which NSIT is violated by introducing nonzero $\gamma$.

It can, therefore, be said that testing of MR is more robust than testing of LR in demonstrating the trade-off between biasedness and unsharpness of measurements.

In practical scenario measurements are not ideal projective. Non-idealness present in measurements can be modeled theoretically by constructing different POVMs. Unbiased unsharp measurement ($0 \leq \lambda \leq 1$ and $\gamma=0$) is one such POVM. For dichotomic spin measurements, unsharpness is linked with the non-idealness due to non-zero overlap between non-orthgonal probe states \cite{mal1}. Biased unsharp measurement is another form of POVM where biasedness characterizes non-idealness in the alignment of Stern-Gerlach apparatus or the deviation from the Gaussian nature of spatial wave-packet of the incident spin-$\frac{1}{2}$ particles \cite{mal}. 

QM violation of any necessary condition of MR can be considered as a signature of non-classicality of the system under consideration. The aforementioned results indicate that non-idealness as captured by unsharpness of the measurements reduces the non-classical feature in the form of QM violations of MR for qubit systems. On the other hand, for a given unsharpness, non-idealness as captured by biasedness of the measurements increases the non-classical feature as evidenced by the QM violations of MR. Moreover, the presence of non-zero biasedness in the measurements enables one to get QM violation of LGI or WLGI for a certain amount of unsharpness, even when this violation is not observed for that certain unsharpness in the absence of non-zero biasedness. 

Hence, it can be concluded that, in the context of QM violations of MR, unsharpness of measurements reduces non-classicality of the qubit system under consideration. Surprisingly, one can diminish this effect of non-idealness of measurements associated with unsharpness by incorporating another form of non-idealness, say, biasedness in the measurements. It is to be noted that the proper physical reason behind the trade-off between unsharpness and biasedness of measurement is not very clear sofar.

\section{Discussions and conclusion}
In quantum theory all possible measurements are represented by POVM formalism. The set of all dichotomic POVMs are fully characterized by two real parameters, say, sharpness parameter and biasedness parameter. These two parameters can be linked with different aspects of the experimental set-up in the laboratory. Sharpness parameter is linked with the precision of measurements while biasedness is related to the error in the alignment of Stern-Gerlach apparatus or deviation from Gaussian nature of spatial wave-packet of the incident particles. In this work we have shown that there is a trade-off between sharpness parameter and biasedness parameter characterizing arbitrary dichotomic POVM in the context of probing QM violations of local-realist and macro-realist inequalities. 

In case of spatial correlations we have investigated the effect of the biasedness parameter on the minimum values of the sharpness parameter above which the QM violation of CHSH inequality persists. We have also derived the necessary and sufficient criteria for violating the CHSH inequality with most general dichotomic POVMs at one side and projective measurements at another side. We have found that if a state does not exhibit QM violation of the CHSH inequality with projective measurements at both sides, then that cannot be enabled with dichotomic POVMs at one side. This result is consistent with a previous result which states that, for two dichotomic measurements per party, projective measurements are sufficient for the maximal violation of CHSH inequality considering optimal quantum strategy for binary XOR game \cite{clave}.

In case of temporal correlations we have demonstrated that, by introducing non-zero biasedness in the measurements, the minimum value of the sharpness parameter above which QM violation of LGI or WLGI persists can be reduced even below the minimum value of sharpness required for probing QM violation of LGI or WLGI in the absence of non-zero biasedness. However, the minimum value of the sharpness parameter over which QM violation of NSIT persists remains the same by introducing any non-zero biasedness parameter in the measurements. This is because any non-zero value of sharpness parameter enables one to obtain QM violation of NSIT in the absence of biasedness. Hence, there is no scope to reduce the minimum value of the sharpness parameter above which QM violation of NSIT persists by introducing any non-zero biasedness.  We can, therefore, conclude that biasedness of measurements can counter the effect of unsharpness of measurements in probing quantumness of the system under consideration by demonstrating QM violations of local-realist and macro-realist inequalities and this kind of trade-off is more robust in the latter case. 

Note that the above trade-off is counter-intuitive and it is non-trivial to underpin the physical reason behind this aforementioned trade-off. Finding out why biasedness of measurements helps in detecting quantum mechanical features for a fixed unsharpness may be interesting for future studies.\\\\

\begin{center}
\textbf{\textit{Acknowledgements:}}
\end{center}
DH and SM acknowledge fruitful discussions with Prof. P. Busch during his visit at SNBNCBS, Kolkata, in the year of 2015. DD acknowledges the financial support from University Grants Commission (UGC), Government of India. We also thank anonymous referees for precious comments which facilitates in improving the earlier version of the draft.

\end{document}